\documentclass[11pt]{article}

\usepackage[a4paper,margin=2.5cm]{geometry}

\usepackage[T1]{fontenc}
\usepackage[utf8]{inputenc}
\usepackage{lmodern}
\usepackage{amsmath,amssymb,amsthm,mathtools}
\usepackage{microtype}
\usepackage{xcolor}
\usepackage{hyperref}

\usepackage{authblk}
\usepackage{setspace}

\numberwithin{equation}{section}

\newtheorem{theorem}{Theorem}[section]
\newtheorem{proposition}[theorem]{Proposition}

\theoremstyle{definition}

\theoremstyle{remark}

\hypersetup{
  colorlinks=true,
  linkcolor=blue!45!black,
  citecolor=green!45!black,
  urlcolor=blue!55!black
}

\title{\Large\bfseries A-type Sigma Models from Differential Poisson Geometry}

\author{Cesar Arias${}^\heartsuit$}
\author{Per Sundell${}^\diamondsuit$}

\affil{${}^\heartsuit$Departamento de Matemática\protect\\ 
Pontificia Universidad Católica de Chile, Santiago, Chile \protect\\ 
\texttt{cesar.arias@uc.cl}}

\affil{${}^\diamondsuit$Instituto de Ciencias Exactas y Naturales, Facultad de Ciencias\protect\\ 
Universidad Arturo Prat, Iquique, Chile\protect\\
\texttt{per.anders.sundell@gmail.com}}

\date{}

\begin{document}
\setstretch{1.1}
\maketitle

\begin{abstract}
\noindent We study the differential Poisson sigma model (DPSM) in the symplectic case and show that its classical reduction defines a distinguished class of A-type models on symplectic targets, not necessarily K\"ahler.
The DPSM is a covariant first-order sigma model whose graded target is the parity-shifted tangent bundle $T[1]M$ of a Poisson manifold $M$. Its graded Poisson tensor encodes a differential Poisson bracket on $C(T[1]M)\cong\Omega^\bullet(M)$, written covariantly in terms of a connection $\Gamma$ and its transpose $\widetilde\Gamma$.
In the nondegenerate case, the Jacobi identities force $\Gamma$ to be flat, while the quartic coupling of the reduced action is given by the curvature of $\widetilde\Gamma$, induced by the torsion of $\Gamma$. 
Thus, the DPSM selects a symplectic class in which the A-model curvature coupling acquires a first-order Poisson origin. 
We describe this class through examples and obstructions; $\mathbb{CP}^n$ and K3 surfaces are excluded, while affine symplectic targets, symplectic tori, and the Kodaira--Thurston manifold furnish explicit examples.
The graded parent geometry on $T[1]M$ equips $\Omega^\bullet(M)$ with a differential graded Poisson algebra structure; in particular, the underlying differential graded Lie algebra defines a strict $L_\infty$-algebra on the observable complex. This chain-level structure is not manifest in the usual K\"ahler formulation of the A-model.
\end{abstract}

\newpage
\tableofcontents
\newpage
\section{Introduction}
\label{section1}
The topological A-model is usually introduced as the A-twist of the two-dimensional $N=(2,2)$ supersymmetric sigma model~\cite{Witten1988,WittenMirror1992}. In that formulation, off-shell closure of the supersymmetry algebra requires the target to be K\"ahler, and the classical A-model action contains the familiar topological term, fermionic kinetic term, and quartic curvature coupling. Quantum mechanically, the theory localizes on pseudo-holomorphic maps, providing a particularly tractable instance of  Gromov--Witten theory and quantum cohomology~\cite{KontsevichManin1994, Dubrovin:1994hc}.

At the classical level, the space of local observables of the A-model contains worldsheet zero-forms with a simple description in terms of the de Rham complex of the target. Let $X:\Sigma\to M$ be the sigma-model map, and let $(x^a,\theta^a)$ denote the induced local coordinates on the graded target $T[1]M$, with $\theta^a$ the odd fiber coordinates transforming as components of a section of $X^*TM[1]$. 
Under the isomorphism 
\begin{equation}
\label{isomorphism}
C(T[1]M)\cong\Omega^\bullet(M),
\end{equation}
where all tensorial spaces over $M$ are assumed to be smooth, the de Rham differential is intertwined with the canonical cohomological vector field
\begin{equation}
Q=\theta^a\frac{\partial}{\partial x^a}
\end{equation}
on $T[1]M$. Thus, if $\eta\in\Omega^\bullet(M)$ corresponds to $F_\eta\in C(T[1]M)$, then
\begin{equation}
QF_\eta=F_{d\eta}.
\end{equation}
Evaluating $F_\eta$ on the worldsheet fields defines a local zero-form observable $\mathcal O_\eta$. The field-space cohomological symmetry
$\mathcal Q$ acts on such observables by pullback of $Q$, and hence
\begin{equation}
\mathcal Q\mathcal O_\eta=\mathcal O_{d\eta}.
\end{equation}
This does not exhaust the full space of local operators of the theory, since one
may also consider operators involving other fields or worldsheet derivatives, but it isolates the natural class of classical cohomological observables considered here.
On a closed worldsheet, the corresponding classical $\mathcal Q$-cohomology is therefore $H^\bullet_{\mathrm{dR}}(M)$.

The purpose of this paper is to show that this classical A-model package arises upon symplectic reduction of differential Poisson sigma models (DPSM), but in a form not tied to the usual K\"ahler presentation. The symplectic case of the DPSM selects a distinguished class of symplectic targets equipped with flat parent connection data. The reduced classical actions define A-type models whose curvature couplings are not introduced as Levi--Civita curvatures of K\"ahler metrics, but arise from the curvature of a Poisson-compatible transposed connection $\widetilde\Gamma$.

A useful point of comparison is the relation between the ordinary Poisson sigma model (PSM)  \cite{Ikeda1994,SchallerStrobl1994} and the A-model. A PSM has target given by a Poisson manifold $(M,\pi)$ and first-order action
\begin{equation}
S_{\mathrm{PSM}}=\int_\Sigma\left[ p_a\wedge dx^a+\frac12\,\pi^{ab}(x)p_a\wedge p_b\right],
\end{equation}
where $x^a$ are local coordinates on $M$ pulled back to $\Sigma$, and $p_a$ are worldsheet one-forms valued in $X^*T^*M$. In the AKSZ formulation~\cite{Alexandrov:1995kv,Cattaneo:1999fm}, the classical master equation is equivalent to the Jacobi identity for $\pi$. When $\pi$ is nondegenerate, its inverse $\omega=\pi^{-1}$ is a symplectic form, and suitable BV gauge fixings relate the symplectic PSM to the A-model~\cite{Bonechi:2007ar,Bonechi:2016wqz}. 

There also exist extensions of topological sigma models beyond the ordinary K\"ahler setting. Topological sigma models with $H$-flux lead to A- and B-type theories naturally formulated in terms of twisted generalized complex geometry~\cite{Kapustin:2004gv}. The worldsheet origin of generalized complex geometry in supersymmetric sigma models was also developed in~\cite{Lindstrom:2004iw}, and the corresponding D-brane geometry was studied in~\cite{Zabzine:2004dp}. 
Topological strings on noncommutative manifolds provide another example, where deformations of two-dimensional $N=(2,2)$ sigma models are described using generalized complex structures~\cite{Kapustin:2003sg}. These constructions show that topological sigma models need not be confined to ordinary K\"ahler geometry. 
The mechanism developed in the present paper is different, since it does not start from an $H$-flux, a noncommutative deformation, or a generalized complex structure, but from the DPSM and the symplectic reduction of its graded Poisson target.

The DPSM~\cite{Arias:2015wha,Arias:2016agc} is a graded extension of the PSM, closely related to the constructions of~\cite{Zucchini:2004ta,Ikeda:2007rn} and to subsequent developments~\cite{Basile:2025kib}. Here we use the original DPSM structure to study its nondegenerate symplectic phase and the A-type theory obtained by reduction.
Related semiclassical differential structures on forms, including examples arising from noncommutative tori and Poisson--Lie groups, were studied in~\cite{Beggs:2003pm}.

The graded target of the DPSM is the parity-shifted tangent bundle $T[1]M$, so that the isomorphism~\eqref{isomorphism} holds. A graded Poisson tensor $\Pi$ on $T[1]M$ therefore defines a bracket on differential forms. Requiring this bracket to be written covariantly introduces a connection on $M$, and compatibility with the canonical cohomological vector field on $T[1]M$ fixes the mixed and odd components of $\Pi$. The resulting first-order theory has fields $(x^a,\theta^a;p_a,\chi_a)$; the pair $(x^a,\theta^a)$ are coordinates on $T[1]M$ pulled back to the worldsheet, while $p_a$ and $\chi_a$ are their conjugate one-form fields. In this sense the DPSM is a covariant doubling of the ordinary PSM, with a quartic fermionic term already present at the parent level.

A central feature of the covariant formulation of the DPSM is the appearance of two transposed connections. We denote them by $\Gamma$ and $\widetilde\Gamma$, related in a local coordinate basis by
\begin{equation}
\widetilde\Gamma^a{}_{bc}=\Gamma^a{}_{cb}\ .
\end{equation}
In the differential Poisson bracket of forms, $\Gamma$ defines the covariant derivatives, whereas $\widetilde\Gamma$ enters the graded Poisson tensor itself. The same separation is reflected in the DPSM action, in which $\Gamma$ governs the fermionic kinetic term, while $\widetilde\Gamma$ controls the cubic and quartic couplings.
The Jacobi identities of the differential Poisson bracket, or, equivalently, the gauge invariance of the DPSM, impose strong restrictions on this pair. In the symplectic case, they force $\Gamma$ to be flat while it leaves room for torsion.
Thus the curvature inherited by the reduced theory is not the ordinary Levi--Civita curvature, but rather $\widetilde R$, which is induced by the torsion.

In the symplectic case, under the assumption that $\widetilde\nabla\pi=0$, elimination of the one-form momenta $p_a$ yields the effective DPSM action
\begin{equation}
S_{\mathrm{red}}=\int_\Sigma
\left[ \frac12\,\omega_{ab}\,dx^a\wedge dx^b+\chi_a\wedge\nabla\theta^a 
+\frac14\,\pi^{de}\widetilde R_{ab}{}^{c}{}_{e} \theta^a\theta^b\chi_c\wedge\chi_d \right]\ ,
\end{equation}
defining a subclass of A-type models that need not be K\"ahler but that are constrained by differential Poisson compatibility conditions. In particular, the flatness of the parent
connection $\Gamma$ gives a simple obstruction in the simply connected case. 
If $M$ is simply connected and compact, and admits a flat parent connection on $TM$, then parallel transport has trivial holonomy and $TM$ is globally trivializable. Thus, simply connected targets with nontrivial tangent bundle, such as $\mathbb{CP}^n$ and K3 surfaces, are excluded. 
At the same time, the class contains natural symplectic examples, including affine symplectic targets, symplectic tori, complex tori with translation-invariant K\"ahler form, and symplectic nilmanifolds~\cite{BensonGordon1988, Hasegawa1989}. The Kodaira--Thurston manifold~\cite{Kodaira1963, Thurston1976} provides a compact non-K\"ahler example with nonzero transposed curvature.

The local observable sector is inherited directly from the graded target. Differential forms on $M$ define local observables, and the cohomological vector field acts as the de Rham differential. In addition, the graded Poisson tensor $\Pi$ equips $\Omega^\bullet(M)$ with a degree-zero differential Poisson bracket. Together with the wedge product and the de Rham differential, this makes \(\Omega^\bullet(M)\) into a differential graded Poisson algebra. In particular, the differential $d$ and the bracket $\{\cdot,\cdot\}$ define a strict $L_\infty$-algebra on the same observable complex. Thus the DPSM does not only produce the reduced action; it also identifies a chain-level algebraic structure behind the classical observable complex.

We emphasize that the analysis in this paper is classical. The reduced theory reproduces the classical data of the A-model---a topological symplectic term, a fermionic kinetic term, a quartic curvature coupling, and a cohomological symmetry acting as the de Rham differential---but we do not treat its quantum localization on pseudo-holomorphic maps, quantum cohomology, or Gromov--Witten theory. These lie beyond the classical scope considered here, and it is in this classical sense that we refer throughout to A-type models.

The paper is organized as follows. 
Section~\ref{section2} reviews the covariant formulation of the DPSM, including the graded target $T[1]M$, the differential Poisson tensor, the pair of transposed connections, and the resulting first-order action. It also includes a nonsymplectic $T^3$ example, illustrating that the parent DPSM is defined more broadly than the symplectic phase used later in the reduction. 
Section~\ref{section3} performs the symplectic reduction and derives the corresponding classical A-type action. 
Section~\ref{section4} discusses the classical observable complex, its differential graded Poisson algebra structure, and the associated strict $L_\infty$-algebra.
Section~\ref{section5} gives examples and obstructions, including affine symplectic targets, the simply connected obstruction, and the Kodaira--Thurston non-K\"ahler background with nonzero transposed curvature. 
Section~\ref{section6} concludes with comments on possible deformations of the differential Poisson algebra and their potential relation to quantum A-model structures.

\section{Differential Poisson sigma models}
\label{section2}
In this section we review the DPSM as a covariant graded extension of the ordinary PSM. The central idea is to enlarge the target from a Poisson manifold $M$ to the graded manifold $T[1]M$, so that differential forms on $M$ are incorporated directly into the target-space geometry~\cite{Arias:2015wha,Arias:2016agc}. 
The resulting graded Poisson structure defines a Poisson bracket for differential
forms involving a pair of transposed connections $\Gamma$ and $\widetilde\Gamma$.
In the nondegenerate case, the Jacobi identities force $\Gamma$ to be flat,
while $\widetilde\Gamma$ may still have nonzero curvature.

\subsection{Graded target and differential Poisson bracket}
Let $M$ be a smooth manifold and consider the graded manifold $T[1]M$, with local coordinates
\begin{equation}
Z^A=(x^a,\theta^a), \qquad \deg x^a=0, \qquad \deg\theta^a=1.
\end{equation}
Here $\deg$ denotes the internal degree on the graded target $T[1]M$. The odd variables $\theta^a$ transform as components of a tangent vector under changes of coordinates on $M$. Hence, functions on $T[1]M$ are naturally identified with differential forms on $M$
\begin{equation}
\label{isom}
C(T[1]M)\cong \Omega^\bullet(M), \qquad
\theta^a\mapsto dx^a .
\end{equation}
Under this identification, the de Rham differential is represented by the canonical cohomological vector field
\begin{equation}
Q=\theta^a\frac{\partial}{\partial x^a}, \qquad Q^2=0.
\end{equation}

A differential Poisson bracket on $\Omega^\bullet(M)$ is encoded by a graded Poisson tensor 
\begin{equation}
\Pi=\frac12\,\Pi^{AB}(x,\theta)
\frac{\partial}{\partial Z^A}\wedge
\frac{\partial}{\partial Z^B}
\end{equation}
on $T[1]M$. The induced bracket 
\begin{equation}
\{F, G\}=\Pi^{AB}\,\partial_A F\,\partial_B G
\end{equation}
on $C(T[1]M)$ is graded antisymmetric and has degree zero, namely
\begin{equation}
\{F, G\} = -(-1)^{\deg F \deg G}\{G, F\},\qquad
\deg\{F, G\}=\deg F+\deg G.
\end{equation}
The graded Jacobi identity is equivalent to
\begin{equation}
\label{Schouten}
[\Pi,\Pi]=0,
\end{equation}
where $[\cdot,\cdot]$ is the graded Schouten bracket. Compatibility with the de Rham differential is the condition that $Q$ acts as a derivation of the bracket, that is
\begin{equation}
\label{Qder}
Q\{F,G\} = \{QF,G\} + (-1)^{\deg F}\{F,QG\}.
\end{equation}
%

\subsection{Covariant doubling of the PSM and the two connections}
The worldsheet fields of the DPSM are a covariant doubling of the fields of the ordinary PSM. Besides the map $X:\Sigma\to M$, with local components $x^a$, and its one-form momenta
\begin{equation}
p_a\in\Omega^1(\Sigma,X^*T^*M),
\end{equation}
we introduce
\begin{equation}
\theta^a\in\Omega^0(\Sigma,X^*TM),\qquad
\chi_a\in\Omega^1(\Sigma,X^*T^*M).
\end{equation}
When the target coordinates are pulled back to the worldsheet, we distinguish the internal degree from the worldsheet form degree. The total degree is the sum of these two degrees. With this convention
\begin{equation}
\deg_{\mathrm{tot}} x^a=0,\qquad \deg_{\mathrm{tot}} p_a=1,\qquad
\deg_{\mathrm{tot}}\theta^a=1,\qquad \deg_{\mathrm{tot}}\chi_a=2.
\end{equation}

The covariant formulation involves two connections related by transposition of their lower indices. We denote them by $\Gamma$ and $\widetilde{\Gamma}$, with
\begin{equation}
\widetilde{\Gamma}^a{}_{bc}:=\Gamma^a{}_{cb}.
\end{equation}
In a coordinate frame this may also be written as
\begin{equation}
\widetilde{\Gamma}^a{}_{bc}=
\Gamma^a{}_{bc}-T^a{}_{bc},\qquad T^a{}_{bc}:=2\Gamma^a{}_{[bc]},
\end{equation}
where $T$ is the torsion of $\Gamma$.

The connection $\Gamma$ is used to build a covariant fermionic kinetic term. While the term $p_a\wedge dx^a$ is already globally defined, its fermionic analogue $\chi_a\wedge d\theta^a$ is not covariant on a curved target, since $\theta^a$ transforms as a vector. The covariant derivative is
\begin{equation}
\nabla\theta^a = d\theta^a + dx^b\,\Gamma^a{}_{bc}\,\theta^c.
\end{equation}
Correspondingly, one performs the connection-dependent shift
\begin{equation}
\label{lambda}
p_a \rightarrow \lambda_a:= p_a-\Gamma^c{}_{ab}\,\theta^b\chi_c,
\end{equation}
so that the graded coordinates and covariantized momenta
\begin{equation}
\label{ZP}
Z^A = (x^a, \theta^a), \qquad P_A=(\lambda_a,  \chi_a),
\end{equation}
combine to give
\begin{equation}
P_A\wedge dZ^A 
= p_a\wedge dx^a +\chi_a\wedge \nabla\theta^a.
\end{equation}
This graded kinetic term is a globally defined scalar two-form on $\Sigma$.

The transposed connection $\widetilde{\Gamma}$ is taken to be compatible with the Poisson tensor
\begin{equation}
\widetilde{\nabla}_c\pi^{ab}=0.
\end{equation}
The above distinction between $\Gamma$ and $\widetilde{\Gamma}$ is essential. The kinetic term uses $\Gamma$, while the Poisson compatibility condition uses $\widetilde{\Gamma}$.

\subsection{Graded Poisson tensor}
We now determine the graded Poisson tensor entering the DPSM. We begin with brackets compatible with the grading and whose bosonic part reproduces the ordinary Poisson structure, that is
\begin{equation}
\{x^a,x^b\}=\pi^{ab}(x),
\qquad
\{x^a,\theta^b\}=M^{ab}{}_c(x)\theta^c,
\qquad
\{\theta^a,\theta^b\}=N^{ab}{}_{cd}(x)\theta^c\theta^d .
\end{equation}
Applying the compatibility condition~\eqref{Qder} to the bracket $\{x^a, x^b\}$ and using the action $Qx^a=\theta^a$, gives
\begin{equation}
Q\{x^a,x^b\}=\{\theta^a,x^b\}+\{x^a,\theta^b\}.
\end{equation}
Since $\{x^a,x^b\}=\pi^{ab}$, this implies
\begin{equation}
M^{ab}{}_c-M^{ba}{}_c=\partial_c\pi^{ab}.
\end{equation}
This condition fixes only the antisymmetric part of $M^{ab}{}_c$. The symmetric part is not determined by compatibility with $Q$. Therefore, using the Poisson-compatible connection $\widetilde{\Gamma}$, we may write the most general covariant representative in the form
\begin{equation}
\label{bracket01S}
\{x^a,\theta^b\} =-\left( \pi^{ad}\widetilde{\Gamma}^b{}_{dc} +S^{ab}{}_c \right)\theta^c,
\end{equation}
where $S^{ab}{}_c=S^{ba}{}_c$ is a tensorial symmetric remainder. 

The tensor $S$ is not invariant data in the symplectic phase. It reflects the freedom in the choice of Poisson-compatible connection used to write the mixed bracket. Indeed, suppose we replace the compatible connection by another compatible connection
\begin{equation}
\widetilde{\Gamma}^a{}_{bc}\rightarrow \widetilde{\Gamma}^a{}_{bc} +\delta\widetilde{\Gamma}^a{}_{bc}.
\end{equation}
Thus, for the mixed bracket~\eqref{bracket01S} to be kept fixed, the shift of $\widetilde{\Gamma}$ must be compensated by the shift 
\begin{equation}
S^{ab}{}_c \rightarrow S^{ab}{}_c-\pi^{(a|d|}\delta\widetilde{\Gamma}^{b)}_{dc}. 
\end{equation}
Hence, when $\pi$ is nondegenerate, any symmetric $S^{ab}{}_c$ can be absorbed into the choice of compatible connection. In the symplectic phase we shall therefore work in the gauge
\begin{equation}
S^{ab}{}_c=0.
\end{equation}
The mixed bracket then becomes
\begin{equation}
\label{bracket01}
\{x^a,\theta^b\}=-\pi^{ad}\widetilde{\Gamma}^b{}_{dc}\theta^c =-\pi^{ad}\Gamma^b{}_{cd}\theta^c.
\end{equation}

The odd-odd bracket is fixed by applying~\eqref{Qder} to $\{x^a, \theta^b\}$. Since $Q\theta^b=0$, one obtains
\begin{equation}
\{\theta^a,\theta^b\} = Q\{x^a,\theta^b\}.
\end{equation}
Using~\eqref{bracket01} to compute the right hand side and rewriting the result in terms of the curvature of $\widetilde{\Gamma}$
\begin{equation}
\widetilde R_{ab}{}^c{}_d
=2\partial_{[a}\widetilde{\Gamma}^c{}_{b]d}
+2\widetilde{\Gamma}^c{}_{e[a}\widetilde{\Gamma}^e{}_{b]d},
\end{equation}
one finds
\begin{equation}
\{\theta^a,\theta^b\} =\left(
-\tfrac12\,\widetilde R_{cd}{}^{ab}+\pi^{ef}\Gamma^a{}_{ec}\Gamma^b{}_{fd} 
\right)\theta^c\theta^d,
\end{equation}
where $\widetilde R_{ab}{}^{cd}:=\pi^{de}\widetilde R_{ab}{}^{c}{}_e$. Thus, in the gauge $S=0$, the graded Poisson tensor is
\begin{equation}
\label{gradedP}
\Pi^{AB}(x,\theta)
=
\begin{pmatrix}
\pi^{ab}
&
-\pi^{ad}\Gamma^b{}_{cd}\theta^c
\\[4pt]
\pi^{bd}\Gamma^a{}_{cd}\theta^c
&
\left(
-\frac12\,\widetilde R_{cd}{}^{ab}
+
\pi^{ef}\Gamma^a{}_{ec}\Gamma^b{}_{fd}
\right)\theta^c\theta^d
\end{pmatrix}.
\end{equation}

\subsection{DPSM action}
The DPSM action is the universal first-order sigma-model action on the graded target
\begin{equation}
S_{\mathrm{DPSM}} = \int_\Sigma\left[ P_A\wedge dZ^A + \frac12\,\Pi^{AB}(Z)P_A\wedge P_B \right].
\end{equation}
Substituting the graded coordinates and shifted momenta~\eqref{ZP}, and the graded Poisson tensor~\eqref{gradedP} into this expression, the connection-square terms cancel and one obtains the covariant component action
\begin{equation}
\label{DPSMaction}
S_{\mathrm{DPSM}}= \int_\Sigma \left[
p_a\wedge dx^a + \chi_a\wedge\nabla\theta^a + \frac12\,\pi^{ab}p_a\wedge p_b
+ \frac14\,\widetilde R_{ab}{}^{cd}\theta^a\theta^b\chi_c\wedge\chi_d \right],
\end{equation}
where the fermionic kinetic term uses $\Gamma$, while the quartic coupling is governed by the curvature of the transposed Poisson-compatible connection $\widetilde{\Gamma}$.

\subsection{Parent Jacobi identities and flat parent connection}
The remaining condition on the graded Poisson tensor is the graded Jacobi identity~\eqref{Schouten}. In the covariant formulation, this gives the DPSM target-space identities~\cite{Arias:2015wha}
\begin{equation}
\label{Jacobi}
\pi^{d[a}T^b{}_{de}\pi^{c]e}=0,
\qquad
\pi^{ae}\pi^{bf}R_{ab}{}^c{}_d=0,
\qquad
\pi^{ab}\nabla_b\widetilde R_{cd}{}^{ef}=0,
\end{equation}
and
\begin{equation}
\label{JacobiRR}
\widetilde R_{a[b}{}^{(cd}\widetilde R_{mn]}{}^{p)q}=0.
\end{equation}
In the above, $R$ is the curvature of $\Gamma$ whereas $\widetilde R$ is the curvature of $\widetilde{\Gamma}$. 
The first two identities involve the torsion and curvature of the parent connection $\Gamma$. The last two identities constrain the curvature of the transposed connection that appears in the quartic term.

The second identity in~\eqref{Jacobi} has an immediate consequence in the symplectic phase. If $\pi$ is nondegenerate, it implies
\begin{equation}
R_{ab}{}^c{}_d=0.
\end{equation}
Thus the parent connection $\Gamma$ is flat. Note that this does not imply that the quartic coupling vanishes. Since $\widetilde{\Gamma}$ differs from $\Gamma$ by torsion, the two curvatures are related by
\begin{equation}
\widetilde R_{ab}{}^c{}_d = R_{ab}{}^c{}_d - 2\nabla_{[a}T^c{}_{b]d} + 2T^c{}_{[a|e|}T^e{}_{b]d}.
\end{equation}
Combining this with $R_{ab}{}^c{}_d=0$, one obtains in the strict symplectic DPSM phase
\begin{equation}
\label{RT}
\widetilde R_{ab}{}^c{}_d = -2\nabla_{[a}T^c{}_{b]d} + 2T^c{}_{[a|e|}T^e{}_{b]d}.
\end{equation}
Therefore, in the symplectic case, the quartic coupling in~\eqref{DPSMaction} is determined by the torsion of the flat connection $\Gamma$.

\subsection{Cohomological symmetry and field equations}
The cohomological vector field on $T[1]M$ acts on the shifted variables by
\begin{equation}
\label{Qaction}
Qx^a=\theta^a, \qquad Q\theta^a=0, \qquad Q\lambda_a=0, \qquad Q\chi_a=-\lambda_a.
\end{equation}
Thus the fields split into two nilpotent doublets
\begin{equation}
(x^a,\theta^a), \qquad (\chi_a,-\lambda_a),
\end{equation}
and
\begin{equation}
Q^2=0
\end{equation}
off shell in the first-order theory. In terms of the original momenta $p_a$, the same symmetry reads
\begin{equation}
Qp_a = -\Gamma^c{}_{ab}\theta^b p_c + \frac12\,\widetilde R_{bc}{}^d{}_a\theta^b\theta^c\chi_d, \qquad
Q\chi_a = -p_a+\Gamma^c{}_{ab}\theta^b\chi_c,
\end{equation}
where the curvature term in $Qp_a$ is again the curvature of the transposed connection.

The action~\eqref{DPSMaction} may be written as a $Q$-exact term 
\begin{equation}
\label{Psi}
S_{\mathrm{DPSM}} =Q\Psi, \qquad
\Psi = -\int_\Sigma \chi_a\wedge \left(dx^a+\frac12\,\pi^{ab}\lambda_b\right).
\end{equation}

Finally, varying the component action gives the first-order field equations
\begin{align}
dx^a+\pi^{ab}p_b&=0,\cr
\nabla\theta^a+\frac12\,\widetilde R_{cd}{}^{ab}\theta^c\theta^d\chi_b&=0,\cr
\nabla\chi_a-\frac12\,\widetilde R_{ab}{}^{cd}\theta^b\chi_c\wedge\chi_d&=0,\cr
\nabla p_a-\widetilde R_{ab}{}^c{}_d\theta^d dx^b\wedge\chi_c
+\frac14\,\nabla_a\widetilde R_{bc}{}^{de}\theta^b\theta^c\chi_d\wedge\chi_e&=0.
\end{align}
These equations form a Cartan-integrable first-order system. Their integrability conditions are equivalent to the target-space Jacobi identities above.

\subsection{Example: a differential Poisson three-torus}
We conclude this section with an example of a differential Poisson structure which is not symplectic and has nonzero transposed curvature. The target space is the three-torus
\begin{equation}
T^3
=\mathbb R^3/(2\pi\mathbb Z)^3,
\end{equation}
with angular coordinates $(x,y,\varphi)$.

The torus $T^3$ has a global frame given by $\{\partial_x,\partial_y,\partial_\varphi\}$. We shall specify the connections below globally by matrices in this frame. For a connection $\Gamma$, we write
\begin{equation}
\nabla_{\partial_a}\partial_b = \Gamma^c{}_{ab}\partial_c, \qquad
(\Gamma_a)^c{}_{b}:=\Gamma^c{}_{ab},
\end{equation}
where all indices $a,b,c$ run over $\{x,y,\varphi\}$. Since the coefficients used below are smooth periodic functions on $T^3$, they define global affine connections on the tangent bundle $TT^3$.

Take the constant rank-two Poisson tensor
\begin{equation}
\pi =\partial_x\wedge\partial_y.
\end{equation}
In the basis $\{\partial_x,\partial_y,\partial_\varphi\}$, its matrix form is
\begin{equation}
\pi^{ab}
=\begin{pmatrix}
0&1&0\\
-1&0&0\\
0&0&0
\end{pmatrix}.
\end{equation}
Thus, the Schouten bracket $[\pi,\pi]=0$, $\pi$ has constant rank two, its symplectic leaves are the two-tori $\varphi=\mathrm{constant}$, and it has no inverse on $T^3$. 

We choose the connection $\Gamma$ to be 
\begin{equation}
\Gamma_x
=\begin{pmatrix}
-1&-1&0\\
1&0&0\\
0&0&0
\end{pmatrix}, \qquad
\Gamma_y
=\begin{pmatrix}
0&-1&0\\
1&1&0\\
0&0&0
\end{pmatrix}, \qquad
\Gamma_\varphi=0.
\end{equation}
Since the connection matrices are constant, the curvature of $\Gamma$ is
\begin{equation}
R_{ab}{}^c{}_{d} = (R_{ab})^c{}_{d} , \qquad R_{ab}=[\Gamma_a,\Gamma_b].
\end{equation}
Moreover, since $\Gamma_\varphi=0$ and $[\Gamma_x,\Gamma_y]=0$, we have that
\begin{equation}
R_{xy}=R_{x\varphi}=R_{y\varphi}=0,
\end{equation}
and therefore $\Gamma$ is flat.

The transposed connection is defined by $\widetilde\Gamma^{c}{}_{ab}=\Gamma^{c}{}_{ba}$. For the choice above, this gives
\begin{equation}
\widetilde\Gamma_x
=\begin{pmatrix}
-1&0&0\\
1&1&0\\
0&0&0
\end{pmatrix}, \qquad
\widetilde\Gamma_y
= \begin{pmatrix}
-1&-1&0\\
0&1&0\\
0&0&0
\end{pmatrix},\qquad
\widetilde\Gamma_\varphi=0.
\end{equation}
These matrices preserve the Poisson tensor $\pi=\partial_x\wedge\partial_y$. Since its components are constant, the condition $\widetilde\nabla\pi=0$ reduces to
\begin{equation}
\widetilde\Gamma^a{}_{ib}\,\pi^{bc}+\widetilde\Gamma^c{}_{ib}\,\pi^{ab}=0,\qquad
i\in\{x,y,\varphi\}.
\end{equation}
For $i=\varphi$ this is trivial, while for $i=x,y$ it is satisfied by the displayed traceless matrices $\widetilde\Gamma_x$ and $\widetilde\Gamma_y$ displayed above.

The transposed curvature is nonzero. The only potentially nonzero components are those with lower indices $x,y$. Indeed, since the matrices $\widetilde \Gamma_x$ and $\widetilde \Gamma_y$ are constant, and $\widetilde \Gamma_\varphi=0$, it follows that
\begin{equation}
\widetilde R_{x\varphi} =[\widetilde\Gamma_x,\widetilde\Gamma_\varphi]=0, \qquad
\widetilde R_{y\varphi}= [\widetilde\Gamma_y,\widetilde\Gamma_\varphi]=0,
\end{equation}
and
\begin{equation}
\widetilde R_{xy}=[\widetilde\Gamma_x,\widetilde\Gamma_y]
=\begin{pmatrix}
1&2&0\\
-2&-1&0\\
0&0&0
\end{pmatrix},
\end{equation}
together with $\widetilde R_{yx}=-\widetilde R_{xy}$. 

It remains to check the Jacobi identities~\eqref{Jacobi}-\eqref{JacobiRR}. The torsion identity
\begin{equation}
\pi^{d[a}T^b{}_{de}\pi^{c]e}=0
\end{equation}
holds because $\pi$ has support only in the $x,y$ directions. A complete antisymmetrization over three upper indices must either repeat one of these two directions, or include the $\varphi$ direction. In the first case the antisymmetrization vanishes, while in the second case one of the Poisson components vanishes.

The identity
\begin{equation}
\pi^{ae}\pi^{bf}R_{ab}{}^c{}_{d}=0
\end{equation}
holds directly because $R_{ab}{}^c{}_{d}=0$. The curvature derivative identity
\begin{equation}
\pi^{ab}\nabla_b\widetilde R_{cd}{}^{ef}=0 
\end{equation}
follows from straightforward calculation. The components of $\widetilde R$ are constant, and the only
nonzero lower curvature directions are $x,y$. The covariant derivative $\nabla$ preserves this structure. Substituting the matrices above into the covariant derivative gives zero after contraction with $\pi^{ab}$.

Finally, the curvature-square identity~\eqref{JacobiRR}
\begin{equation}
\widetilde R_{a[b}{}^{(cd}\widetilde R_{mn]}{}^{p)q}=0
\end{equation}
also holds. The lower indices of $\widetilde R$ have support only in the two directions $x,y$. Since the expression antisymmetrizes over the three indices $b,m,n$, either only $x,y$ occur, in which case the antisymmetrization over three slots vanishes, or one index is $\varphi$, in which case one curvature factor is zero.

Thus the data above define a DPSM background satisfying
\begin{equation}
\widetilde\nabla_c\pi^{ab}=0, \qquad R_{ab}{}^{c}{}_d=0, \qquad \widetilde R_{ab}{}^{c}{}_d\neq 0,
\end{equation}
together with the remaining DPSM Jacobi identities. Since $\operatorname{rank}\pi=2<3$, this is a degenerate DPSM background with nonzero transposed curvature.

\section{Symplectic reduction and relation to the A-model}
\label{section3}
We now examine the nondegenerate phase of the DPSM. In this case, the Poisson tensor $\pi^{ab}$ is assumed to be invertible, with inverse
\begin{equation}
\label{omegaInverse}
\omega_{ab}:=(\pi^{-1})_{ab},\qquad \pi^{ac}\omega_{cb}=\delta^a{}_b.
\end{equation}
The two-form
\begin{equation}
\omega=\frac12\,\omega_{ab}\,dx^a\wedge dx^b
\end{equation}
is closed by the Jacobi identity for $\pi$, and hence defines a symplectic structure on $M$.

The aim of this section is to show that the symplectic reduction of the DPSM gives a
classical A-type model action on this symplectic target. The resulting theory is formulated
in symplectic terms and is not tied to a K\"ahler metric.

\subsection{Reduced A-model action and torsion-induced curvature}
We begin from the component DPSM action~\eqref{DPSMaction}. The dependence on the one-form momenta $p_a$ is purely algebraic, and their dependence is contained in
\begin{equation}
\int_\Sigma\left[p_a\wedge dx^a+\frac12\,\pi^{ab}p_a\wedge p_b\right].
\end{equation}
The equation of motion for $p_a$ is $dx^a+\pi^{ab}p_b=0$ which, using \eqref{omegaInverse}, gives
\begin{equation}
\label{piout}
p_a=-\omega_{ab}dx^b.
\end{equation}
The relation above amounts to completing the square and writing
\begin{equation}
p_a\wedge dx^a+\frac12\,\pi^{ab}p_a\wedge p_b=\frac12\,\pi^{ab}\bigl(p_a+\omega_{ac}dx^c\bigr)\wedge\bigl(p_b+\omega_{bd}dx^d\bigr)+\frac12\,\omega_{ab}dx^a\wedge dx^b.
\end{equation}
Thus, integrating out the momenta $p_a$ in~\eqref{DPSMaction} produces the reduced action
\begin{equation}
\label{Ared}
S_{\mathrm{red}}=\int_\Sigma\left[\frac12\,\omega_{ab}dx^a\wedge dx^b+\chi_a\wedge\nabla\theta^a+\frac14\,\widetilde R_{ab}{}^{cd}\theta^a\theta^b\chi_c\wedge\chi_d\right].
\end{equation}
The first term above is the topological symplectic term $\int_\Sigma X^*(\omega)$. The remaining terms are the fermionic kinetic term and a quartic curvature interaction.

The action \eqref{Ared} has the classical form of the A-model action, but its geometry is different from the usual K\"ahler presentation. The connection in the kinetic term is the flat parent connection $\Gamma$, while the curvature in the quartic interaction is that of the transposed Poisson-compatible connection $\widetilde\Gamma$. 

Recalling that, in the symplectic DPSM phase the Jacobi identities force the connection $\Gamma$ to be flat, then relation~\eqref{RT} holds, and therefore
\begin{equation}
\widetilde R_{ab}{}^{cd}=\pi^{de}\left(-2\nabla_{[a}T^c{}_{b]e}+2T^c{}_{[a|f|}T^f{}_{b]e}\right).
\end{equation}
Substituting this into~\eqref{Ared}, the reduced theory can be written as
\begin{equation}
S_{\mathrm{red}}=\int_\Sigma\left[\frac12\,\omega_{ab}dx^a\wedge dx^b+\chi_a\wedge\nabla\theta^a+\frac12\,\pi^{de}\left(-\nabla_{[a}T^c{}_{b]e}+T^c{}_{[a|f|}T^f{}_{b]e}\right)\theta^a\theta^b\chi_c\wedge\chi_d\right].
\end{equation}
Thus the quartic curvature coupling is induced by the torsion of a flat parent connection.

\subsection{Induced cohomological symmetry}
After eliminating the momenta using~\eqref{piout}, the shifted field~\eqref{lambda} becomes the composite expression
\begin{equation}
\lambda_a^{\mathrm{red}} = -\omega_{ab}dx^b - \Gamma^c{}_{ab}\theta^b\chi_c.
\end{equation}
Therefore the parent symmetry~\eqref{Qaction} induces the following transformations on the reduced fields
\begin{equation}
\label{Qred}
Qx^a=\theta^a, \qquad Q\theta^a=0, \qquad Q\chi_a = \omega_{ab}dx^b + \Gamma^c{}_{ab}\theta^b\chi_c.
\end{equation}

The induced symmetry is no longer off-shell nilpotent in general. Indeed, since $\lambda_a^{\mathrm{red}}$ is not an independent field, one finds
\begin{equation}
Q^2\chi_a=-\omega_{ab}E^b,
\end{equation}
where
\begin{equation}
E^a:=\nabla\theta^a + \frac12\,\widetilde R_{cd}{}^{ab}\theta^c\theta^d\chi_b
\end{equation}
is the reduced equation of motion obtained by varying~\eqref{Ared} with respect to $\chi_a$. The same equation of motion controls the variation of the reduced action. A direct calculation gives
\begin{equation}
QS_{\mathrm{red}} = \int_\Sigma E^a\wedge Q\chi_a .
\end{equation}
Hence the reduced transformations leave $S_{\mathrm{red}}$ invariant on shell.

The same point can be seen from the $Q$-exact form of the parent DPSM action. After reduction, the $Q$-primitive~\eqref{Psi} becomes
\begin{equation}
\Psi_{\mathrm{red}}
= -\int_\Sigma \chi_a\wedge \left( dx^a+\frac12\,\pi^{ab}\lambda_b^{\mathrm{red}} \right),
\end{equation}
which amounts to
\begin{equation}
S_{\mathrm{red}} = Q\Psi_{\mathrm{red}} - \frac12\int_\Sigma\chi_a\wedge E^a .
\end{equation}
Hence the reduced action is $Q$-exact only on shell. Equivalently, the reduced A-type model inherits its cohomological symmetry from the off-shell parent DPSM, but this symmetry becomes on shell after the momenta are eliminated.

\subsection{Relation to K\"ahler A-models}
The action \eqref{Ared} has the same formal structure as the classical A-model action. However, the geometric origin of its curvature coupling is different. In the usual K\"ahler A-model, the connection is the Levi--Civita connection of a K\"ahler metric and the quartic term is governed by the corresponding K\"ahler curvature. In the DPSM reduction, by contrast, the connection $\Gamma$ is flat by the parent Jacobi identities, while the quartic term is governed by the curvature of the transposed Poisson-compatible connection $\widetilde\Gamma$.

Thus the strict DPSM phase does not recover arbitrary K\"ahler A-model targets. If one imposed a vanishing torsion $T=0$, then $\widetilde\Gamma=\Gamma$, hence the curvatures of both connections would be equal $\widetilde R=R$. But in the nondegenerate DPSM phase $R=0$, so the quartic coupling would vanish. Nontrivial reduced curvature therefore requires nonzero torsion of the flat parent connection.

This is precisely why the resulting class is naturally non-K\"ahler. The construction is formulated in symplectic terms and does not require an integrable complex structure or a K\"ahler metric. When a compatible complex structure is present, one may decompose \eqref{Ared} in complex coordinates and compare it with the usual A-model expression. But the DPSM does not identify the curvature coupling with Levi--Civita curvature unless additional conditions are imposed.

The appropriate conclusion is therefore not that the DPSM rederives the ordinary K\"ahler A-model for arbitrary targets. Rather, it constructs a distinguished class of A-type models on symplectic targets equipped with flat parent connection data.

\section{Classical observables and the differential Poisson algebra}
\label{section4}
We now turn to the algebraic structure carried by the classical local observables. The reduced theory obtained in Section~\ref{section3} has the usual de Rham sector of A-model observables. Differential forms on the target define local operators, and the cohomological symmetry acts as the de Rham differential. The DPSM adds a second piece of structure. Since the parent target $T[1]M$ carries a graded Poisson tensor, the de Rham complex is equipped with a degree-zero differential Poisson bracket. Together with the wedge product and the de Rham differential, this makes the observable complex into a differential graded Poisson algebra.

\subsection{Local observables and the de Rham complex}
The local observables relevant for the present discussion are functions on the graded target $T[1]M$ evaluated on the worldsheet zero-form fields. Under the canonical identification~\eqref{isom}, a differential form
\begin{equation}
\eta=\frac1{p!}\eta_{a_1\cdots a_p}(x)\,dx^{a_1}\wedge\cdots\wedge dx^{a_p}
\end{equation}
corresponds to the homogeneous function
\begin{equation}
\widehat\eta(x,\theta)=\frac1{p!}\eta_{a_1\cdots a_p}(x)\theta^{a_1}\cdots\theta^{a_p}.
\end{equation}
Evaluating this function on the fields gives the local observable
\begin{equation}
\mathcal O_\eta(\sigma):=\widehat\eta\bigl(x(\sigma),\theta(\sigma)\bigr).
\end{equation}
The target cohomological vector field $Q$ on $T[1]M$ acts on $\widehat\eta$ as
\begin{equation}
Q\widehat\eta=\widehat{d\eta}.
\end{equation}
The field-space cohomological symmetry $\mathcal Q$ acts on the corresponding local observable by pullback, and hence
\begin{equation}
\mathcal Q\mathcal O_\eta=\mathcal O_{d\eta}.
\end{equation}
Thus the classical local observable complex is the de Rham complex of the target. In particular, on a closed worldsheet
\begin{equation}
H_{\mathcal Q}^\bullet(\mathrm{Obs}_{\mathrm{loc}})
\cong H^\bullet_{\mathrm{dR}}(M).
\end{equation}
This statement is independent of whether the target is K\"ahler. It only uses the graded target $T[1]M$ and the canonical cohomological vector field. Hence the non-K\"ahler A-models obtained by DPSM reduction have the same classical de Rham observable cohomology as the usual A-model. This coincidence holds at the level of $\mathcal Q$-cohomology. As we discuss next, the DPSM in addition equips the observable complex with a differential Poisson bracket that is not manifest in the usual formulation, so the two theories differ in chain-level structure even where their cohomologies agree. 

\subsection{Differential Poisson brackets of forms}
The additional structure comes from the graded Poisson tensor $\Pi$ on $T[1]M$. Its induced bracket on $C(T[1]M)$ becomes, under the identification with differential forms, a differential Poisson bracket~\cite{Arias:2015wha} on $\Omega^\bullet(M)$. In the symplectic DPSM gauge used above, this bracket is
\begin{equation}
\label{dPB}
\{\eta,\rho\}=\pi^{ab}(\nabla_a\eta)\wedge(\nabla_b\rho)+(-1)^{|\eta|}\,\widetilde R^{ab}\wedge\iota_a\eta\wedge\iota_b\rho,
\end{equation}
where $|\eta|$ is the form degree of $\eta$, $\iota_a$ denotes contraction along $\partial_a$, and
\begin{equation}
\widetilde R^{ab}:=\frac12\,\widetilde R_{cd}{}^{ab}\,dx^c\wedge dx^d.
\end{equation}
The curvature appearing here is the same curvature that appears in the quartic coupling of the reduced action.

The bracket \eqref{dPB} has degree zero, is graded antisymmetric and obeys the graded Leibniz rule, that is
\begin{equation}
|\{\eta,\rho\}|=|\eta|+|\rho|, \qquad \{\eta,\rho\}=-(-1)^{|\eta| |\rho|}\{\rho,\eta\},
\end{equation}
and
\begin{equation}
\{\eta,\rho\wedge\sigma\}=\{\eta,\rho\}\wedge\sigma+(-1)^{|\eta||\rho|}\rho\wedge\{\eta,\sigma\}.
\end{equation}
The graded Jacobi identity follows from the parent condition $[\Pi,\Pi]=0$.

Compatibility of $\Pi$ with the cohomological vector field $Q$ implies that the de Rham differential acts as a derivation of the bracket
\begin{equation}
\label{dder}
d\{\eta,\rho\}=\{d\eta,\rho\}+(-1)^{|\eta|}\{\eta,d\rho\}.
\end{equation}
Thus $\Omega^\bullet(M)$ is a differential graded Poisson algebra. Moreover, \eqref{dder} ensures that~\eqref{dPB} descends to de Rham cohomology. Indeed, the bracket of two closed forms is closed and the bracket with an exact form is exact, so it descends to a well-defined bracket on $H^\bullet_{dR}(M)$. This bracket is part of the DPSM-induced chain-level structure and is not manifest in the usual classical A-model formulation.

\subsection{Differential graded Poisson algebra and A-model observables}
We now record the homotopy-algebraic consequence of the differential Poisson structure constructed above. Since the bracket is induced by the parent graded Poisson tensor on $T[1]M$, the observable complex inherits not only the de Rham differential but also a canonical chain-level Poisson bracket.

The identities established above show that the classical observable complex $\Omega^\bullet(M)$, equipped with the de Rham differential $d$, the wedge product, and the differential Poisson bracket $\{\cdot,\cdot\}$, is the differential graded Poisson algebra
\begin{equation}
\bigl(\Omega^\bullet(M),d,\wedge,\{\cdot,\cdot\}\bigr).
\end{equation}
Indeed, the bracket has degree zero, is graded antisymmetric, satisfies the graded Jacobi identity, obeys the graded Leibniz rule with respect to the wedge product, and is compatible with the de Rham differential through~\eqref{dder}. In particular, the operations
\begin{equation}
\ell_1(\eta)=d\eta, \qquad \ell_2(\eta,\rho)=\{\eta,\rho\}, \qquad \ell_n=0,\quad n\geq 3,
\end{equation}
define a strict $L_\infty$-algebra on the same observable complex. The corresponding identities reduce to \(d^2=0\), the derivation property~\eqref{dder}, and the graded Jacobi identity of the bracket.

This chain-level structure is not manifest in the usual classical A-model formulation. There one emphasizes the de Rham cohomology ring $H^\bullet_{\mathrm{dR}}(M)$, with product induced by the wedge product of forms. After quantization on a K\"ahler target, this ring is deformed by worldsheet instantons to quantum cohomology. By contrast, the DPSM perspective keeps track of the differential Poisson bracket~\eqref{dPB} before passing to $\mathcal Q$-cohomology.

For the reduced A-type models constructed here, the differential Poisson bracket is part of the same geometric package as the action. The same transposed curvature $\widetilde R$ enters both the bracket on forms and the quartic term of the reduced action~\eqref{Ared}. Thus the reduced action, the cohomological symmetry, and the differential graded Poisson algebra of observables all come from the parent graded Poisson tensor on $T[1]M$.

\section{Examples and obstructions}
\label{section5}
Prior to concluding, we examine the geometric scope of the strict symplectic DPSM reduction at the level of some examples. The class obtained above is not the class of all A-model targets, nor is it the class of all symplectic manifolds. It consists of symplectic targets equipped with parent connection data satisfying the differential-Poisson compatibility conditions.

In the symplectic phase, the Poisson tensor is invertible, and the parent Jacobi identities~\eqref{Jacobi} imply
\begin{equation}
\label{flatness}
R_{ab}{}^c{}_d=0.
\end{equation}
Thus the parent connection $\Gamma$ is flat. The examples below illustrate two consequences of this condition, namely the global restrictions it imposes on the target geometry, and the possibility that the transposed connection $\widetilde\Gamma$ still has nonzero curvature.

\subsection{A basic obstruction}
Locally, a flat connection can always be trivialized, so condition~\eqref{flatness} does not by itself look restrictive. Globally, however, flat connections may still carry monodromy. If the target is simply connected, this possibility disappears, and flatness forces the tangent bundle to be globally trivial. This simple observation gives the following obstruction.

\begin{proposition}
Let $M$ be simply connected and compact. If $M$ admits a strict symplectic DPSM background, then $TM$ is trivializable. In particular, any simply connected symplectic manifold with nontrivial tangent bundle is excluded from the strict symplectic DPSM class.
\end{proposition}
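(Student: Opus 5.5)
The plan is to reduce the statement to the standard fact that a flat connection on a vector bundle over a simply connected base has trivial holonomy, so that parallel transport yields a global parallel frame. The first step is to extract flatness from the hypotheses. A strict symplectic DPSM background means a nondegenerate $\pi$ together with a connection $\Gamma$ on $TM$ such that the graded tensor \eqref{gradedP} satisfies the Jacobi identities \eqref{Jacobi}--\eqref{JacobiRR}. Since $\pi$ is invertible, contracting the second identity in \eqref{Jacobi} twice with $\omega$ gives \eqref{flatness}, namely $R_{ab}{}^c{}_d=0$. So $\Gamma$ is a flat linear connection on $TM$; its torsion plays no role in what follows.

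The second step is to show that the holonomy is trivial. Fix a base point $x_0\in M$ and a basis $\{e_i\}$ of $T_{x_0}M$. For flat $\Gamma$, parallel transport along a path depends only on the path's homotopy class with fixed endpoints. This is the standard consequence of $R=0$: the local parallel transport equation $\partial_a s^c+\Gamma^c{}_{ab}s^b=0$ is then Frobenius-integrable, so parallel sections exist on contractible charts, and one glues these along a homotopy. The holonomy map $\pi_1(M,x_0)\to GL(T_{x_0}M)$ is therefore a homomorphism, and it is trivial because $\pi_1(M)=0$.

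The third step is to build the trivialization. Define $E_i(x)$ as the parallel transport of $e_i$ along any path from $x_0$ to $x$; this requires $M$ to be connected, which is implicit in simple connectivity. This is well defined because all paths are homotopic. Each $E_i$ is smooth, since locally it agrees with the parallel sections obtained in the second step. The $E_i$ are pointwise linearly independent, because parallel transport is a linear isomorphism. Hence $\{E_i\}$ is a global frame, $TM\cong M\times\mathbb R^{2n}$, and the "in particular" clause follows by contraposition.

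The main obstacle is the second step, the homotopy invariance of parallel transport for flat connections. I would not reprove it but cite it as the standard result that flat bundles correspond to representations of $\pi_1$, equivalently that they have locally constant transition functions. Compactness is not actually needed for the argument; I would remark that the conclusion holds for any connected, simply connected $M$, with compactness relevant only to the named examples. For those, $\mathbb{CP}^n$ has $c_1\neq0$ and a K3 surface has $\chi=24\neq0$, so both have nontrivial tangent bundles.
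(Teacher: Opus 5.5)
Your argument is correct and matches the paper's proof step for step: flatness of $\Gamma$ from the second identity in \eqref{Jacobi} via invertibility of $\pi$, trivial holonomy from flatness together with $\pi_1(M)=0$, and a global frame obtained by parallel transport from a base point. You are also right that compactness is never used. One caveat concerns your closing remark, which lies outside the proposition itself. The fact $c_1\neq 0$ does not by itself show that the real bundle $T\mathbb{CP}^n$ is nontrivial: for instance, $S^2\times T^2$ with its product complex structure has $c_1\neq 0$ but trivializable tangent bundle, since $TS^2\oplus\underline{\mathbb{R}}$ is trivial. So argue via $\chi(\mathbb{CP}^n)=n+1\neq 0$, as the paper does.
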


\begin{proof}
Let $(M,\omega,\Gamma)$ be a strict symplectic DPSM background. Since $\omega$ is nondegenerate, the inverse Poisson tensor $\pi=\omega^{-1}$ is also nondegenerate, and the second Jacobi identity in~\eqref{Jacobi} implies $\Gamma$ is a flat connection on $TM$.

A flat connection has trivial parallel transport around contractible loops, or equivalently trivial restricted holonomy. Since $M$ is simply connected, every loop is contractible. Hence the full holonomy of $\Gamma$ is trivial. Parallel transport is therefore independent of the path.

Next, choose a point $p\in M$ and a basis of $T_pM$. By parallel transporting this basis to every other point of $M$, one obtains a globally defined frame of $TM$. Hence
\begin{equation}
TM\cong M\times\mathbb R^{\dim M}.
\end{equation}
Therefore $TM$ is trivializable or, equivalently, $M$ is parallelizable.
\end{proof}

The obstruction above excludes many standard K\"ahler A-model targets. For example, $\mathbb{CP}^n$ is simply connected with $\chi(\mathbb{CP}^n)=n+1$. Since a compact parallelizable manifold admits a nowhere-vanishing vector field, the Poincar\'e--Hopf theorem implies that its Euler characteristic vanishes. Therefore $T\mathbb{CP}^n$ is not trivializable, and $\mathbb{CP}^n$ cannot carry strict symplectic DPSM data.  

Similarly, a K3 surface is simply connected and $\chi(\mathrm{K3})=24$, so $T(\mathrm{K3})$ is not trivializable, and K3 is also excluded.

\subsection{Flat affine symplectic manifolds}
The simplest examples are affine symplectic manifolds. By this we mean symplectic manifolds equipped with a flat, torsion-free connection $\Gamma$ preserving the symplectic form
\begin{equation}
R(\Gamma)=0, \qquad T(\Gamma)=0, \qquad \nabla\omega=0.
\end{equation}
Since $T(\Gamma)=0$, the transposed connection agrees with the parent connection, $\widetilde\Gamma=\Gamma$, and hence $\widetilde R=R=0$. In affine coordinates, the reduced action~\eqref{Ared} becomes
\begin{equation}
S_{\mathrm{red}} = \int_\Sigma \left[\frac12\,\omega_{ab}dx^a\wedge dx^b + \chi_a\wedge d\theta^a \right].
\end{equation}
Basic flat affine examples are $\mathbb R^{2n}$ with a constant symplectic form, and $T^{2n}$ with the standard flat connection together with a translation-invariant symplectic form. Complex tori
\begin{equation}
M=\mathbb C^n/\Lambda,
\end{equation}
where $\Lambda$ is a lattice in $\mathbb C^n$, give the corresponding complex examples when equipped with the standard flat connection and a translation-invariant K\"ahler form.

\subsection{The Kodaira--Thurston manifold}
We now give a compact symplectic non-K\"ahler example in the strict DPSM class with nonzero transposed curvature. The example is the Kodaira--Thurston manifold~\cite{Kodaira1963,Thurston1976}, a four-dimensional nilmanifold which
may be viewed as a nontrivial $T^2$-bundle over $T^2$. It is the first known instance of a compact manifold admitting both complex and symplectic structures without a compatible K\"ahler structure (see~\cite{Bazzoni2014} for a modern account and generalizations). 
Its non-K\"ahler character follows from the general rigidity of K\"ahler nilmanifolds, which are necessarily diffeomorphic to a torus~\cite{BensonGordon1988,Hasegawa1989}, so the odd first Betti number $b_1=3$ of the Kodaira--Thurston manifold already excludes it.

Recall that a nilmanifold is a compact quotient $M=G/\Lambda$ of a simply connected nilpotent Lie group $G$ by a lattice $\Lambda$. Differential forms on $G$ which are invariant under left translations descend to globally defined forms on $M$.

The Kodaira--Thurston manifold $\mathrm{KT}^4$ is a four-dimensional nilmanifold admitting a global invariant coframe $\{e^1,e^2,e^3,e^4\}$ satisfying
\begin{equation}
de^1=de^2=de^3=0, \qquad de^4=e^1\wedge e^2.
\end{equation}
Let $\{E_1,E_2,E_3,E_4\}$ be the dual invariant frame. The only nonzero Lie bracket is
\begin{equation}
[E_1,E_2]=-E_4,
\end{equation}
together with its antisymmetric counterpart. Equivalently, if
\begin{equation}
\label{EEbra}
[E_a,E_b]=C^c{}_{ab}E_c,
\end{equation}
then
\begin{equation}
C^4{}_{12}=-1,
\qquad
C^4{}_{21}=1
\end{equation}
are the only nonzero structure constants.

Consider the two-form
\begin{equation}
\omega=e^1\wedge e^4+e^2\wedge e^3.
\end{equation}
In the coframe $\{e^1,e^2,e^3,e^4\}$, its matrix form is
\begin{equation}
\omega_{ab}=\begin{pmatrix}
0&0&0&1\\
0&0&1&0\\
0&-1&0&0\\
-1&0&0&0
\end{pmatrix}.
\end{equation}
$\omega$ is closed, since
\begin{equation}
d\omega = d(e^1\wedge e^4)+d(e^2\wedge e^3)
=-e^1\wedge de^4
=-e^1\wedge e^1\wedge e^2
=0,
\end{equation}
and nondegenerate because
\begin{equation}
\omega^2 = 2\,e^1\wedge e^4\wedge e^2\wedge e^3 \neq 0.
\end{equation}
Thus $\mathrm{KT}^4$ is symplectic. It is compact and non-K\"ahler, since $b_1(\mathrm{KT}^4)=3$, whereas Hodge symmetry implies that every odd-degree Betti number of a compact K\"ahler manifold is even.

The inverse Poisson tensor is
\begin{equation}
\pi=-E_1\wedge E_4-E_2\wedge E_3.
\end{equation}
Equivalently, in the frame $\{E_1,E_2,E_3,E_4\}$, its matrix is
\begin{equation}
\pi^{ab}=\begin{pmatrix}
0&0&0&-1\\
0&0&-1&0\\
0&1&0&0\\
1&0&0&0
\end{pmatrix}.
\end{equation}

We now choose the parent connection $\Gamma$ which, in the invariant frame, is defined by
\begin{equation}
\nabla_{E_a}E_b=\Gamma^c{}_{ab}E_c, \qquad (\Gamma_a)^c{}_{b}:=\Gamma^c{}_{ab}.
\end{equation}
Take
\begin{equation}
\Gamma_1=\begin{pmatrix}
0&0&0&0\\
0&0&0&0\\
-1&0&0&0\\
0&0&0&0
\end{pmatrix},\qquad
\Gamma_2=
\begin{pmatrix}
0&0&0&0\\
0&0&0&0\\
0&0&0&1\\
0&0&0&0
\end{pmatrix}, \qquad
\Gamma_3=0,\qquad \Gamma_4=0.
\end{equation}

Since the invariant frame is not a coordinate frame, the Lie bracket~\eqref{EEbra} of the frame vectors contributes to the curvature. In this frame
\begin{equation}
R(E_a,E_b) = [\Gamma_a,\Gamma_b]-C^c{}_{ab}\Gamma_c.
\end{equation}
The only nonzero structure constants are $C^4{}_{12}=-1$ and $C^4{}_{21}=1$, but $\Gamma_4=0$. Moreover, the displayed matrices $\Gamma$ commute. Hence
\begin{equation}
R(\Gamma)=0.
\end{equation}
Thus the parent connection is flat.

The torsion is
\begin{equation}
T^c{}_{ab} = 2\, \Gamma^c{}_{[ab]} - C^c{}_{ab}.
\end{equation}
For the connection above, the nonzero torsion components are
\begin{equation}
T^3{}_{24}=1,
\qquad
T^3{}_{42}=-1,
\qquad
T^4{}_{12}=1,
\qquad
T^4{}_{21}=-1.
\end{equation}
Hence the parent connection is flat but has nonzero torsion.

In a non-coordinate frame, the transposed connection~$\widetilde\nabla_{E_a}E_b := \nabla_{E_b}E_a + [E_a,E_b]$. Expanding with~\eqref{EEbra}, one finds
\begin{equation}
(\widetilde\Gamma_a)^c{}_{b} = \Gamma^c{}_{ba} + C^c{}_{ab}.
\end{equation}
In a coordinate frame $C^c{}_{ab}=0$ and this reduces to $\widetilde\Gamma^c{}_{ab}=\Gamma^c{}_{ba}$.
For the connection above, one obtains
\begin{equation}
\widetilde\Gamma_1=\begin{pmatrix}
0&0&0&0\\
0&0&0&0\\
-1&0&0&0\\
0&-1&0&0
\end{pmatrix},\qquad
\widetilde\Gamma_2=
\begin{pmatrix}
0&0&0&0\\
0&0&0&0\\
0&0&0&0\\
1&0&0&0
\end{pmatrix},
\end{equation}
and
\begin{equation}
\widetilde\Gamma_3=0,\qquad
\widetilde\Gamma_4=
\begin{pmatrix}
0&0&0&0\\
0&0&0&0\\
0&1&0&0\\
0&0&0&0
\end{pmatrix}.
\end{equation}

The transposed connection is Poisson-compatible. Since $\pi=\omega^{-1}$, this is equivalent to preserving $\omega$. In matrix form, one checks the algebraic identities
\begin{equation}
\widetilde\Gamma_i^T\omega+\omega\widetilde\Gamma_i=0,
\qquad
i=1,\ldots,4,
\end{equation}
for the displayed matrices. Therefore $\widetilde\nabla\pi=0$.

As for the curvature $\widetilde R$ of the transposed connection $\widetilde \Gamma$, this is nonzero. In the same invariant frame
\begin{equation}
\widetilde R(E_a,E_b) = [\widetilde\Gamma_a,\widetilde\Gamma_b] - C^c{}_{ab}\widetilde\Gamma_c.
\end{equation}
The only nonzero curvature direction is $(E_1,E_2)$, together with antisymmetry in the lower pair. A direct computation gives
\begin{equation}
\widetilde R(E_1,E_2)
=
\begin{pmatrix}
0&0&0&0\\
0&0&0&0\\
0&1&0&0\\
0&0&0&0
\end{pmatrix}.
\end{equation}
Equivalently
\begin{equation}
\widetilde R_{12}{}^3{}_{2}=1.
\end{equation}
With the convention
\begin{equation}
\widetilde R_{ab}{}^{cd} := \pi^{de}\widetilde R_{ab}{}^c{}_{e},
\end{equation}
it follows that
\begin{equation}
\widetilde R_{12}{}^{33}=1,
\qquad
\widetilde R_{21}{}^{33}=-1.
\end{equation}
Therefore, the transposed curvature $\widetilde R (\widetilde \Gamma)$ is nonzero.

It remains to check the symplectic limit of the parent Jacobi identities~\eqref{Jacobi}-\eqref{JacobiRR}. The torsion identity
\begin{equation}
\pi^{d[a}T^b{}_{de}\pi^{c]e}=0
\end{equation}
is checked directly from the displayed torsion components and the matrix of $\pi$. The flatness condition
$R_{ab}{}^{c}{}_d=0$ has already been verified. The curvature derivative identity
\begin{equation}
\pi^{ab}\nabla_b\widetilde R_{cd}{}^{ef}=0
\end{equation}
also holds by direct substitution, where the covariant derivative is taken with respect to the parent connection $\Gamma$. Finally, the curvature-square identity
\begin{equation}
\widetilde R_{a[b}{}^{(cd}\widetilde R_{mn]}{}^{p)q}=0
\end{equation}
holds because the only nonzero components are $\widetilde R_{12}{}^{33}=1=-\widetilde R_{21}{}^{33}$. Thus, antisymmetrization in the lower indices forces the quadratic expression to vanish.

Thus KT${}^4$ with the data above defines a strict symplectic DPSM background with nonzero transposed curvature.

\section{Conclusions}
\label{section6}
We have shown that the symplectic phase of the DPSM gives a first-order construction of a restricted class of A-type sigma models on symplectic targets. After integrating out the one-form momenta, the reduced action takes the form
\begin{equation}
S_{\mathrm{red}} = \int_\Sigma \left[ \frac12\,\omega_{ab}dx^a\wedge dx^b
+\chi_a\wedge\nabla\theta^a
+ \frac14\,\widetilde R_{ab}{}^{cd}\theta^a\theta^b\chi_c\wedge\chi_d \right].
\end{equation}
Thus the reduced theory contains the standard ingredients of a classical A-type model, namely a topological symplectic term, a fermionic kinetic term, a quartic curvature interaction, and a cohomological symmetry. The geometric origin of these terms, however, is different from the usual K\"ahler construction. The input data are a symplectic form together with a parent connection satisfying the DPSM compatibility conditions; in particular, 
\begin{equation}
R(\Gamma)=0.
\end{equation}
The parent connection is therefore flat, but it need not be torsion-free. However, the curvature appearing in the reduced action is not the curvature of $\Gamma$. Instead, it is the curvature of the transposed Poisson-compatible connection $\widetilde\Gamma$, viz., 
\begin{equation}
\widetilde R_{ab}{}^c{}_{d} = -2\nabla_{[a}T^c{}_{b]d} + 2T^c{}_{[a|e|}T^e{}_{b]d}.
\end{equation}
This is the basic geometric distinction between the reduced DPSM model and the usual K\"ahler presentation.

The resulting model is a non-trivial restriction of the A-model.  In particular, if the target is simply connected and compact, the flatness of $\Gamma$ forces the tangent bundle to be trivializable. Hence simply connected symplectic manifolds with nontrivial tangent bundle, such as $\mathbb{CP}^n$ and K3 surfaces, cannot arise in the strict symplectic DPSM class. At the same time, the class is not empty. It contains the flat affine symplectic examples, and it also contains compact non-K\"ahler examples with nonzero transposed curvature, as illustrated by the Kodaira--Thurston background.

We have also identified the classical local observables inherited from the parent DPSM in vanishing worldsheet form-degree. Target differential forms $\eta\in\Omega^\bullet(M)$ define $\mathcal O_\eta$ satisfying $\mathcal Q\mathcal O_\eta=\mathcal O_{d\eta}$. Thus the space of classical observables contains a subspace isomorphic to the de Rham cohomology of the target. In addition, upon the reduction, the graded parent Poisson tensor induces a degree-zero differential Poisson bracket on $\Omega^\bullet(M)$. Together with the de Rham differential and wedge product, this gives the differential graded Poisson algebra described above. Equivalently, the differential $d$ and the bracket $\{\cdot,\cdot\}$ define the associated strict $L_\infty$-algebra on the same observable complex. The parent DPSM therefore determines both the reduced action and a chain-level algebraic structure on the observable complex.

The current analysis of the classical model serves as a preamble for the construction of quantum theories based on the DPSM. Indeed, the classical structure suggests a natural deformation problem. In the ordinary PSM, perturbative quantization on the disk gives a field-theoretic realization of Kontsevich's deformation quantization~\cite{Cattaneo:1999fm,
Kontsevich2003} of the Poisson algebra
\begin{equation}
\big(C(M),\{\cdot,\cdot\}_\pi\big).
\end{equation}
For the DPSM, the corresponding classical object is the differential graded Poisson algebra
\begin{equation}
\bigl(\Omega^\bullet(M),d,\wedge,\{\cdot,\cdot\}\bigr).
\end{equation}
Here the classical structure offers a suggestive starting point.
Since the DPSM action is the superfield lift of the PSM action---obtained by replacing the Poisson manifold $(M,\pi)$ with the graded target $(T[1]M,\Pi)$ in the first-order form---one might expect its disk quantization to admit an organization along the lines of the Cattaneo--Felder scheme. 
Should this be the case, it would be natural to ask whether the resulting deformation carries the differential graded Poisson algebra $\bigl(\Omega^\bullet(M),d,\wedge,\{\cdot,\cdot\}\bigr)$ into a non-strict homotopy algebra on differential forms, generalizing the deformation quantization of $\bigl(C(M),\{\cdot,\cdot\}_\pi\bigr)$. 
In the symplectic phase studied here, such a deformation would presumably be controlled by the same connection data that enter the reduced A-type action, in particular the transposed curvature $\widetilde R$. We hope to return to these questions elsewhere.

In summary, the strict symplectic DPSM reduction does not reproduce the full class of K\"ahler A-models, nor does it cover all symplectic targets. It selects a narrower differential-Poisson class, namely symplectic manifolds equipped with a flat parent connection whose transpose is Poisson-compatible and whose torsion satisfies the remaining DPSM Jacobi
identities.

\subsection*{Acknowledgments} 

We thank R. Aros, F. Caro, F. Diaz, C. Iazeolla, D. Rovere, F. Silva, D. Tempo, O. Valdivia, M. Valenzuela, and B. Vallilo for useful discussions.
The work of P.S. is supported by FONDECYT Regular grants N°1250672 and N°1252053; the MATH-AmSud project SGP 24-MATH-12 funded by ANID and Minist\`ere de l'Europe et des Affaires \'Etrang\`eres (MEAE); and UNAP-VRII Consolida grant ``Higher-spin inspired IR modifications of 3d gravity”.  
P.S. expresses his gratitude to the DCF of UNAB in Santiago and CECs in Valdivia for hospitality during various stages of this work.

\bibliographystyle{JHEP}
\bibliography{DPSMRefs}

\end{document}